\newtheorem{example}{Example}
\newtheorem{theorem}{Theorem}
\newtheorem{remark}{Remark}
\begin{document}

\title{Polynomial Circuit Verification using BDDs}

\author{
 {\centering{\begin{tabular}{c}
Rolf Drechsler  \\
\\
Institute of Computer Science \\
University of Bremen \\
28359 Bremen, Germany  \\
\multicolumn{1}{c}{drechsler@uni-bremen.de}
\end{tabular}
 }} }
\maketitle
\thispagestyle{empty}
\begin{abstract}
\begin{quote}
Verification is one of the central tasks during circuit design. While most of the approaches have exponential worst-case behaviour, in the following techniques are discussed for proving polynomial circuit verification based on Binary Decision Diagrams (BDDs). It is shown that for circuits with specific structural properties, like e.g.~tree-like circuits, and circuits based on multiplexers derived from BDDs complete formal verification can be carried out in polynomial time and space. 
\end{quote}
\end{abstract}
\section{Introduction}

With the increasing complexity of digital circuits, ensuring functional correctness is a challenging problem. Various techniques for verification have been proposed. While simulation and emulation can be carried out very fast, the coverage that can be obtained is very low. Only formal verification approaches can ensure 100\% correctness (see e.g.~\cite{Dre:2018}). Several automatic proof techniques have been proposed for the bit-level, like e.g.~BDDs or SAT solvers, and for the word-level, like WLDDs or SMT solvers. For an overview see \cite{DS:2001}. In the following, only BDDs are considered, while many of the arguments can directly be transferred to other proof techniques as well. 

Based on formal proof techniques, correctness can be ensured, but the techniques have exponential worst case behavior and for this are known to not scale well in general. Even in cases where the final result can efficiently be represented, intermediate results might be too large to be handled. 
\begin{example}\label{ex:miter}
Consider the general formulation of circuit equivalence based on a miter circuit in Figure \ref{fi:miter}. If the two circuits {\em Circuit 1} and {\em Circuit 2} are functionally equivalent, the output signal {\em out} becomes $0$. This can be represented trivially e.g.~by a BDD. But dependent on the circuits, the construction of the intermediate BDDs might fail.  
\end{example} 
\begin{figure}[t]
\begin{center}
\includegraphics[scale=1.0]{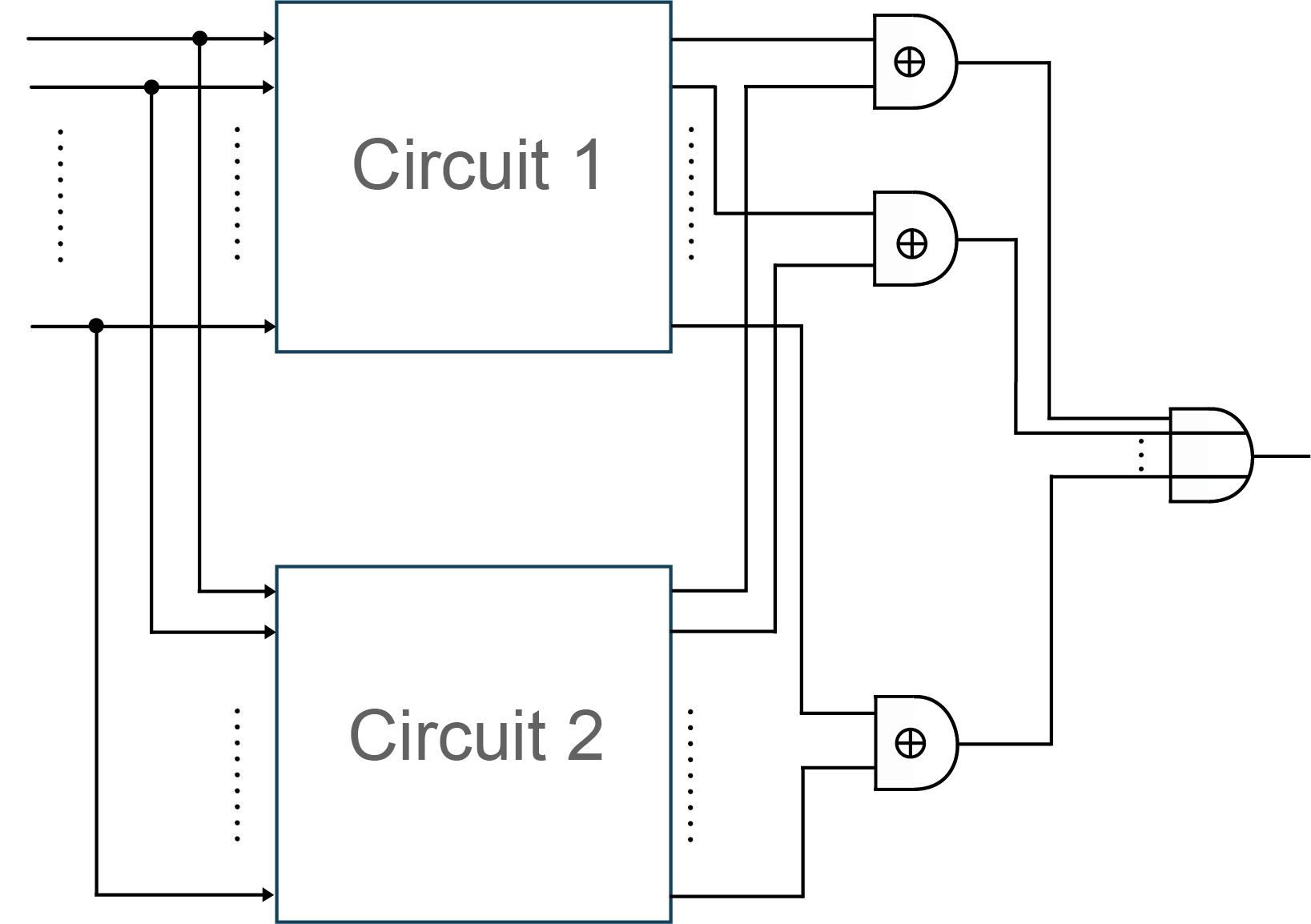}
\end{center}
\caption{Miter circuit}\label{fi:miter}       
\end{figure}
This has already been observed in early applications of BDDs and as a consequence approaches have been presented to reduce the peak size during the construction \cite{SBSV:93,JNC+:95,DG:99c}. These techniques were either heuristic in nature or had prohibitive large run times. Efficient construction could not be guaranteed in general. 

Recently it has been proven in \cite{Dre:2021} that for different kind of adder circuits a complete formal verification based on BDDs can be carried out in polynomial time and space. This not only holds for the representation of the outputs, but for the complete construction.

While the study in \cite{Dre:2021} was restricted to adders, in the following it is shown that for restricted types of circuits polynomial verification based on BDDs can be carried out. This contains tree-like circuits and circuits derived by a 1-to-1 mapping from BDDs. The later ones have intensively been studied and are known for their good testability properties (see e.g.\cite{Bec:92,DSF:2004}). 

\section{Notation and Definition}\label{se:notdef}

Let $f: {\bf B}^n \rightarrow {\bf B}^m$ be a Boolean function  over variable set $X_n = \{x_1, \ldots, x_n \}$. 

\subsection{Circuit}

A {\em circuit} is a {\em Directed Acyclic Graph} (DAG) $G(V,E)$ with vertex set $V$ and edges $E$. It represents a Boolean function $f: {\bf B}^n \rightarrow {\bf B}^m$ with $n$ variables that are associated with the {\em Primary Inputs} (PIs) and $m$ output signals associated with the {\em Primary Outputs} (POs). The internal nodes are associated with basic Boolean functions, like AND, OR, NAND, NOR or INV (=inverter). The edges represent the signals connecting the nodes, PIs and POs. 

\begin{figure}[t]
\begin{center}
\includegraphics[scale=1.0]{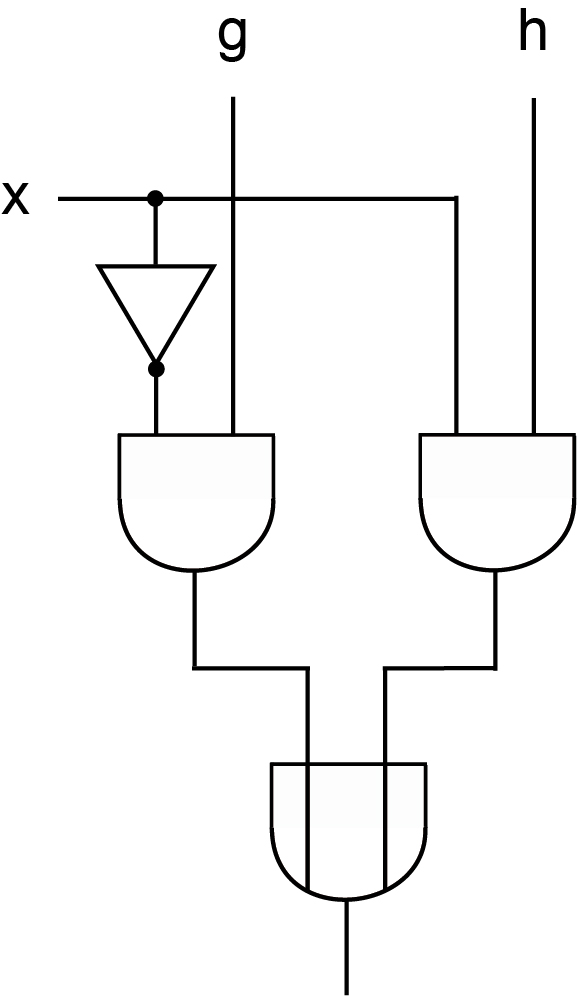}
\end{center}
\caption{MUX representation by basic gates}\label{fi:mux_gate}       
\end{figure}

\begin{example}
The circuit for function $f=\overline{x}\cdot g + x \cdot h$ is shown in Figure \ref{fi:mux_gate}. This circuit is also denoted as multiplexer -- MUX for short -- in the following. 
\end{example}

\subsection{Binary Decision Diagrams}

Reduced ordered Binary Decision Diagrams (BDDs) \cite{Bry:86,DB:98b} are DAGs where a Shannon decomposition
$$f=\overline{x}_if_{\overline{x}_i}+x_if_{x_i} (1 \leq i \leq n)$$ is carried out in each node. The {\em size} of a BDD is the number of non-terminal nodes. 

An important property of BDDs is that the synthesis operations, like AND, OR or composition, can be carried out in polynomial time and space using the {\em apply}-operation \cite{Bry:85}. This can also be described by the operator {\em if-then-else} (ite) \cite{BRB:90}\footnote{Notice that in the following for the discussion and the proofs BDDs without CEs are considered.}, since {\em ite} allows for an easy and elegant implementation in a BDD package. A sketch of the algorithm is shown in Figure \ref{fi:ite}, where  $T$ (=then) and $E$ (=else) denote the high- and low-successors, respectively, and e.g.~$F_{x_i}$ ($F_{\overline{x}_i}$) is the cofactor to $1$ ($0$) with respect to variable $i$:

\begin{figure}
\begin{array}[t]{lllrr}
ITE (F,G,H) \{ & \\ 
	& \text{if} \, (\text{terminal case}) \{ \\
     &  \hspace{1cm} \text{return} \, (\text{result of terminal case}); \\
     &  \} \, \text{else} \, \{ \\
       &  \text{if} \, ((F,G,H) \in \text{computed-table})  \\ 
     &	\hspace{1cm} \text{return \text{computed\_table}} \, (F,G,H); \\
     &  \} \, \text{else} \, \{ \\ 
	& 	\hspace{1cm} \text{let} \, x_i \, \text{be the top variable of} \, (F,G,H); \\
	&	\hspace{1cm} T = ITE(F_{x_i},G_{x_i},H_{x_i}); \\ 
	&	\hspace{1cm} E = ITE(F_{\overline{x}_i},G_{\overline{x}_i},H_{\overline{x}_i}); \\
	&	\hspace{1cm} \text{if} \, (T = E) \, \text{return} \, T ; \\
	&	\hspace{1cm} R = \text{find\_or\_add\_unique\_table} (v, T, E); \\
	&   \hspace{1cm} \text{insert\_computed\_table} \, (F,G,H,R); \\
	&   \hspace{1cm} \text{return} \, R; \\
     & \}  \\
\}
\end{array}
\caption{ITE algorithm}\label{fi:ite}
\end{figure}

The {\em ite}-operator has a polynomial worst case behavior, i.e.~for graphs $F$, $G$ and $H$ the result is bound by $O(|F| \cdot |G| \cdot |H|)$. This bound holds under the assumption of an optimal hashing in $O(1)$. But also in the case of a worst case behavior of the hashing function, {\em ite} remains polynomial (see \cite{DS:2001b}).

Simple examples for terminal cases that might occur are as follows: 
\begin{itemize}
  \item if (F == 1 || G == H) return G;    
  \item if (F == 0) return H;
\end{itemize}
In software implementations of BDDs, typically more complex cases are considered to allow for an early termination of the recursive calls (see e.g.~\cite{Som:2001a,Som:2001}). 

\subsection{Symbolic Simulation}

To build the BDDs for the output signals of a circuit, the circuit is traversed in a topological order starting from the inputs. For the inputs signals the corresponding BDDs are initially generated. Then, for each gate in the circuit the corresponding synthesis operation based on {\em ite} is carried out. This process is called {\em symbolic simulation} in the following. 

\begin{example}
The symbolic simulation for a circuit consisting of a single AND gate is shown in Figure \ref{fi:symbsym}.
\end{example}

\begin{figure}[t]
\begin{center}
\includegraphics[scale=1.0]{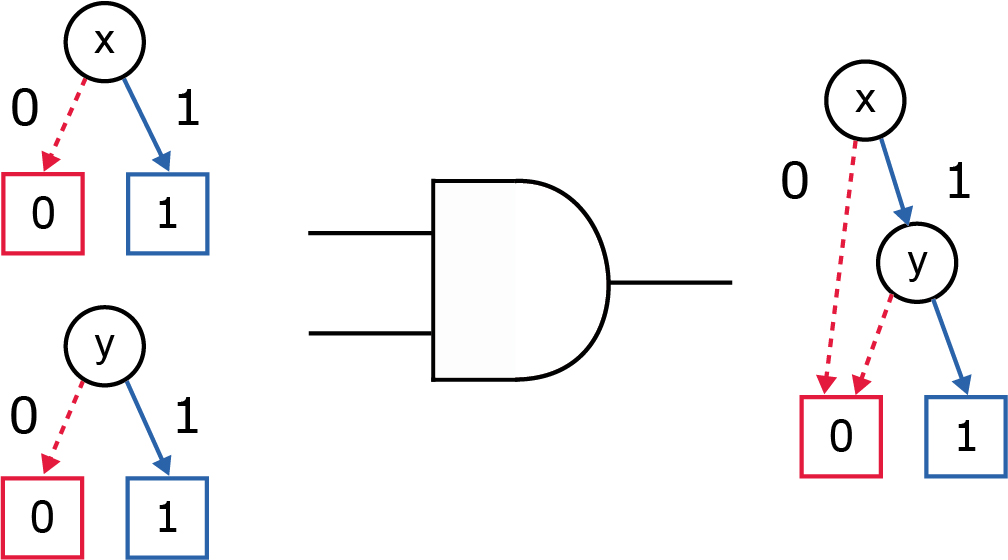}
\end{center}
\caption{Symbolic simulation for AND gate}\label{fi:symbsym}       
\end{figure}

\section{Polynomial Verification}\label{se:poly_ver}

While for several types of circuits it is known that the BDD representing the output has a polynomial size, it is not obvious that during the construction of the output BDD (significantly) larger intermediate results are computed (see Example \ref{ex:miter}). In the following we consider different circuit structures and show that under some constraints a polynomial symbolic simulation can be ensured. 

\subsection{Tree-like Circuits}

For tree-like circuits, i.e.~circuits without fanouts, it is well known that the BDDs representing the output function have linear size. This means that for each input variable only a single node is needed, if the variable ordering results from a {\em Depth First Search} (DFS) traversal of the circuit.\footnote{Notice that this only holds for basic gates, like AND, OR, NAND, NOR, while e.g.~EXOR has to be handled differently (see e.g.~\cite{DBJ:98}).}  This is typically proven by induction (see e.g. Lemma 5.3 (Section 5.4.1) in \cite{DB:98b}). But this again only considers the BDD for the output of the circuit, but not during the construction. For this, in the following we will use a different argument to obtain the same result. But this argument can later be generalized to also reason about circuits that do not have a tree-like structure.

For the AND-gate the value $0$ ($1$) is a {\em controlling value} (cv) ({\em non-controlling value} (ncv)), since the output is (not) determined, if one of the inputs assume this value. In a similar way these values can be described for other Boolean gates, like NAND, OR or NOR. 

When constructing BDDs based on the {\em ite}-operator introduced above, a {\em cv} directly leads to a terminal case and by this ends the recursion. This implies that not only the final BDDs have a linear size in the number of input variables, but also during the construction it can never occur that additional nodes are generated. 

The same argument can be applied to general circuits.
\begin{theorem}\label{th:top_variable}
Let $g$ be a BDD representing a circuit and $x_i$ a variable that does not occur in $g$. Then $f = x_i \cdot g$ can be constructed in $O(|g|)$, if $x_i$ is the top-variable in the ordering of $f$.
\end{theorem}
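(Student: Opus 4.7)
The plan is to prove the theorem by directly tracing the execution of the ITE algorithm of Figure~\ref{fi:ite} on the call that realizes $f = x_i \cdot g$, namely $\text{ite}(x_i, g, 0)$. Because $x_i$ is the top variable in the ordering of $f$ by hypothesis, and because $x_i$ does not occur anywhere in the BDD $g$ (so the top variable of $g$ lies strictly below $x_i$ in the ordering), the top variable of the triple $(x_i, g, 0)$ is $x_i$ itself, and the algorithm splits on $x_i$ at the outermost call.

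Then I would analyze the two resulting recursive calls. The then-branch is $T = \text{ite}((x_i)_{x_i}, g_{x_i}, 0_{x_i}) = \text{ite}(1, g, 0)$, which hits the terminal case ``if $F=1$ return $G$'' and returns $g$ in $O(1)$ time. The else-branch is $E = \text{ite}((x_i)_{\overline{x}_i}, g_{\overline{x}_i}, 0_{\overline{x}_i}) = \text{ite}(0, g, 0)$, which hits ``if $F=0$ return $H$'' and returns $0$ in $O(1)$ time. Since $T \ne E$, the algorithm then adds a single new node labeled $x_i$ with then-successor $g$ and else-successor $0$ and returns it. The total work is therefore $O(1)$, which is trivially $O(|g|)$, and the resulting BDD $f$ has size $|g|+1 = O(|g|)$ because it reuses the nodes of $g$ unchanged.

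The one step that needs care is justifying that $g_{x_i}$ and $g_{\overline{x}_i}$ both equal the same BDD pointer $g$ (not merely semantically equivalent functions), since this is what prevents any further recursive expansion inside the two child calls. Because reduced ordered BDDs are canonical and $x_i$ is absent from $g$, the function represented by $g$ is independent of $x_i$, so both cofactors coincide with $g$. A BDD package detects this immediately by comparing top-variable indices: since the top variable of $g$ lies below $x_i$ in the ordering, cofactoring with respect to $x_i$ is a no-op returning the same node pointer. With this observation in place, the argument reduces to the two terminal-case lookups described above, and the overall construction cost is bounded by $O(|g|)$ as claimed.
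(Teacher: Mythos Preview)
Your proof is correct and follows the same strategy as the paper, namely tracing the \emph{ite} call realizing $x_i\cdot g$ and observing that the branch $x_i=0$ terminates immediately because $0$ is the controlling value for AND. Your analysis is in fact sharper than the paper's: the paper states that on the other branch ``the BDD of $g$ has to be traversed, but no additional nodes are created,'' whereas you correctly note that the terminal case $F=1$ returns the pointer $g$ immediately, so the actual work is $O(1)$ rather than $O(|g|)$ (both of which of course establish the claimed bound).
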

\begin{proof}
For the case of $x_i=0$, in the recursive call of {\em ite}, directly a terminal case is reached, since $0$ is the {\em cv} for the AND-gate. On the 2nd branching, the BDD of $g$ has to be traversed, but no additional nodes are created. 
\end{proof}
The same results hold, if the AND-operation is substituted by OR, NAND or NOR or for the complemented variable $\overline{x}_i$. Also the same argument can be applied if we consider gates with more than two inputs, e.g.~a 4-input AND. Then the gate has to be first decomposed into 2-input gates and then the theorem can be applied. 

\subsection{General Circuits}

For arbitrary circuits without any restrictions on the structure, polynomial symbolic simulation can not be ensured. E.g.~in the case of the multiplier function in contrast each BDD construction will lead to an exponential blow-up, since the final BDD has an exponential size (see \cite{Bry:91}). But, if the sizes of the BDDs representing the internal signals of the circuit are polynomially bound, the complete construction of the output BDDs can be carried out efficiently:
\begin{theorem}\label{th:poly_circ}
Let $C$ be a circuit over $n$ input variables. If the BDD size of each internal signal of the circuit has a size polynomial in $n$, then the complete symbolic simulation of the circuit can be carried out in polynomial time and space.
\end{theorem}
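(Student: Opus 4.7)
The plan is to walk the circuit in topological order and bound the cost of each symbolic-simulation step separately, using the polynomial worst-case bound of the \emph{ite}-operator together with the hypothesis that every internal BDD is polynomial in $n$.

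First, fix a topological ordering $g_1,\ldots,g_k$ of the gates of $C$, where any multi-input gate is first decomposed into 2-input gates (a constant-factor overhead, as already noted after Theorem~\ref{th:top_variable}). For each primary input variable, initialize a BDD of constant size. Then, for $j = 1,\ldots,k$, compute the BDD $B_j$ representing the output signal of gate $g_j$ by a single call to \emph{ite} on the BDDs already computed for its two fan-in signals together with an appropriate constant (for example, $B_j = \mathit{ite}(B_a, B_b, \mathbf{0})$ for an AND with fan-ins $a,b$, and analogously for OR, NAND, NOR, INV). Since the fan-in signals are themselves internal signals of $C$, the hypothesis gives $|B_a|, |B_b| \le p(n)$ for some fixed polynomial $p$. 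The stated bound on \emph{ite} then yields an $O(p(n)^2)$ cost for this step, both in time and in the number of newly created BDD nodes.

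Summing over all gates gives a total cost of $O(k\cdot p(n)^2)$, which is polynomial in $n$ whenever the circuit size $k$ is polynomial in $n$ (the natural setting in which a circuit over $n$ variables is presented). The peak memory is bounded by the total storage of all computed BDDs, namely $O(k\cdot p(n))$, which is likewise polynomial.

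The one subtlety worth flagging, which is the real content of the argument, is whether the recursion inside a single \emph{ite} call could transiently build far more BDD nodes than the operands themselves contain -- this is exactly the phenomenon that made Example~\ref{ex:miter} problematic at the circuit level. For a single call, however, the cited bound $O(|F|\cdot|G|\cdot|H|)$ already counts \emph{all} recursive invocations (each memoized via the computed table), so every node ever created or even inspected during the call is accounted for by the polynomial sizes of the operands, independently of what the final result looks like. Hence the hypothesis on internal signals is enough to control both the time and the peak space of each \emph{ite} call, and summing over the $k$ gates closes the argument.
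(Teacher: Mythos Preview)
Your proof is correct and follows exactly the same line as the paper's own argument: the polynomial worst-case bound of \emph{ite} applied gate by gate, combined with the hypothesis on the internal BDDs, yields polynomial overall cost. The paper states this in a single sentence, whereas you have simply spelled out the per-gate bound, the summation over the topological order, and the implicit assumption that the circuit size is polynomial in $n$; these are useful clarifications but do not change the approach.
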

\begin{proof}
This directly follows from the observation that {\em ite} has a polynomial worst case behavior and the fact that each internal signal has a polynomial BDD representation. 
\end{proof}
The theorem can be easily be relaxed to not consider all internal signals.
\begin{remark}\label{re:const}
To ensure polynomial worst case behavior not all internal signals have to be considered, while it is sufficient to have a constant number $c$ of gates between these signals. The number $c$ should not be too large, since this would result in a still polynomial but for practical reasons too large estimate.
\end{remark}
This property will be used in Section \ref{sub:bdd_circuit} and was also (implicitly) applied in \cite{Dre:2021} for the special case of adder circuits, i.e.~the {\em Conditional Sum Adder} (CoSA) particularly. 

\subsection{BDD-Circuits}\label{sub:bdd_circuit}

A BDD-circuit results from a BDD by substituting each internal node of the BDD by a MUX (see Figure \ref{fi:mux_gate}). 

\begin{example}
For the OR-function $f_{OR}=x_1 + x_2$ the BDD is shown on the left hand side of Figure \ref{fi:bdd_circuit}. The resulting circuit is shown on the right hand side, where each internal node is substituted by a MUX cell. Notice that the circuit is drawn upside down compared to the BDD in the common way that the outputs are at the lower end of the figure. 
\end{example}
Since this is a 1-to-1 mapping, it is obvious that the BDD for each of the signals in the circuit have polynomial size. Thus, Theorem \ref{th:poly_circ} can be applied. When we also want to argue over the internal signals of the MUX, the way the MUX is represented has to be considered as well. If we assume the standard realization from Figure \ref{fi:mux_gate} and following Remark \ref{re:const}, it can be observed:
\begin{enumerate}
\item The side input $x_i$ only consists of one node. 
\item The negation of this input also consists of one node only. 
\item Due to the ordering restriction of the original BDD, the function represented at the inputs (beside the selection signal) of the MUX are independent of $x_i$ (see also Theorem \ref{th:top_variable}). 
\item Thus, the internal signals at the outputs of the AND-gates have polynomial sizes. 
\item Due to the polynomial worst-case behavior of the {\em ite}-operator, also the output of the OR-gate has polynomial size.
\end{enumerate}
In summary, we obtain: 
\begin{theorem}
For BDD-circuits with MUX using the standard representation the complete symbolic simulation of the circuit can be carried out in polynomial time and space.
\end{theorem}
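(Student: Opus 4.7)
The plan is to reduce this theorem to Theorem \ref{th:poly_circ}, using Remark \ref{re:const} so that we only need to bound a sparse set of internal signals and the constant number of gates between them. The circuit is a 1-to-1 translation of a BDD into MUX cells, so every MUX output signal represents exactly a cofactor of the top function, i.e.\ a function that lives at some node of the original BDD. Consequently, at each MUX output the associated BDD is (isomorphic to) a subgraph of the original BDD and hence of polynomial, in fact linear, size. This already gives the hypothesis of Theorem \ref{th:poly_circ} on the ``coarse'' signal set consisting of MUX outputs.

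Next I would open up one MUX cell (Figure \ref{fi:mux_gate}) and check that the at most five internal signals within a cell are each polynomially bounded, so that Remark \ref{re:const} applies with $c$ equal to the (constant) number of gates inside a MUX. Let the cell compute $f = \overline{x}_i \cdot g + x_i \cdot h$, where $g$ and $h$ are the MUX outputs feeding the two data inputs. First, the selection input $x_i$ and its inverter output $\overline{x}_i$ are each represented by a single BDD node. Second, by the ordering inherited from the original BDD, neither $g$ nor $h$ depends on $x_i$, and $x_i$ is the top variable for $f$; hence Theorem \ref{th:top_variable} directly applies and yields the AND-gate outputs $\overline{x}_i \cdot g$ and $x_i \cdot h$ in $O(|g|)$ and $O(|h|)$ time and space, respectively, with no intermediate blow-up. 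Third, the OR-gate output is $f$ itself, whose BDD size is polynomial by the 1-to-1 mapping. So every signal inside the cell has polynomial BDD size.

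Combining these two observations, every signal in the whole BDD-circuit, internal to a MUX or between MUXes, has a polynomially bounded BDD representation. Invoking Theorem \ref{th:poly_circ} together with the polynomial worst-case bound $O(|F|\cdot |G|\cdot |H|)$ of \emph{ite}, the total work of the symbolic simulation is polynomial in $n$ and in the size of the original BDD, which proves the claim.

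The only subtle step is the third item in the cell-analysis, i.e.\ making precise that the data inputs $g$ and $h$ are indeed $x_i$-free in the chosen variable order. This is exactly where the structural fact that the circuit comes from an ordered BDD is used, and it is what licenses the application of Theorem \ref{th:top_variable}; once this is in place, the rest is routine bookkeeping with the \emph{ite} complexity bound.
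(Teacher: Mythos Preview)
Your proposal is correct and follows essentially the same route as the paper: first observe that MUX outputs correspond to nodes of the original BDD and hence have polynomial BDDs, then open a single MUX cell and bound its internal signals via Theorem~\ref{th:top_variable} (using that $g,h$ are $x_i$-free by the BDD ordering), and finally invoke Theorem~\ref{th:poly_circ} together with Remark~\ref{re:const}. The only cosmetic difference is that the paper justifies the OR-gate output size via the \emph{ite} bound, whereas you appeal directly to the 1-to-1 correspondence; both arguments are valid and interchangeable.
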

\begin{figure}[t]
\begin{center}
\includegraphics[scale=0.75]{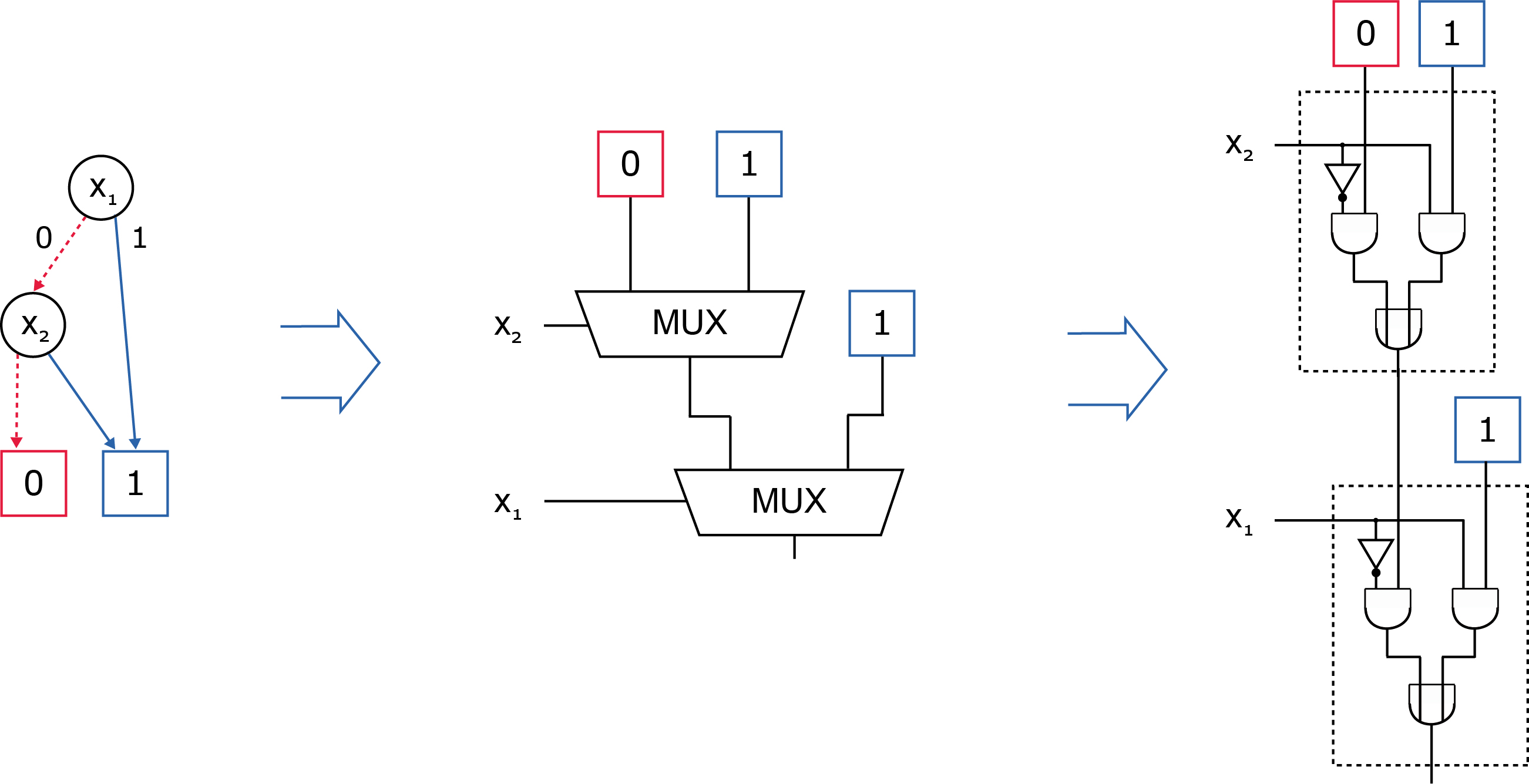}
\end{center}
\caption{BDD and resulting circuit}\label{fi:bdd_circuit}       
\end{figure}

\section{Conclusion}\label{sec:concl}

Ensuring verification of designed circuits is a central aspect of successful chip design in the future. The concept of polynomial circuit verification has only been demonstrated before for adder circuits, while here it has been shown to be applicable to more general Boolean functions. This includes functions that can be represented by tree-like circuits. Furthermore, it has been shown that circuits derived from BDDs by a 1-to-1 mapping can efficiently be formally verified. 

Future work will go into two directions: On the one hand side the class of functions that can be polynomially verified using BDDs should be extended. Besides this, alternative proof concept can be investigated, like it has been done successfully using BMDs for multipliers (see \cite{KMB+:97}). The investigation becomes more difficult here, since polynomial synthesis operations, like {\em ite} for BDDs, do not exist in general in this case. 

\section*{Acknowledment}
Parts of this work have been supported by DFG within the Reinhart Koselleck Project {\em PolyVer: Polynomial Verification  of Electronic Circuits} (DR 287/36-1). Furthermore, the author likes to thank Alireza Mahzoon for helpful comments and discussions. 

\bibliographystyle{latex8}
\bibliography{lit_bank,fey_loc,grosse_loc,lit_bank_ext}

\end{document}